\newtheorem{theorem}{Theorem}[section]
\newtheorem{proposition}[theorem]{Proposition}
\newtheorem{defn}[theorem]{Definition}
\numberwithin{equation}{section}
\numberwithin{figure}{section}
\theoremstyle{plain}
\newtheorem{thm}{\protect\theoremname}
  \theoremstyle{plain}
  \theoremstyle{plain}
  \newtheorem{cor}[thm]{\protect\corollaryname}
  \providecommand{\corollaryname}{Corollary}
  \providecommand{\lemmaname}{Lemma}
\providecommand{\theoremname}{Theorem}
\newcommand{\musiclab}{\textsc{MusicLab}}
\newcommand{\mo}{\textsc{M\&O}}
\begin{document}

\title{Measuring and Optimizing Cultural Markets}
\author{
A. Abeliuk$^1$
\and
G. Berbeglia$^2$
\and
M. Cebrian$^1$
\and
P. Van Hentenryck$^3$
}
\maketitle

\noindent
$^1$ The University of Melbourne and National ICT Australia. \\
$^2$ Melbourne Business School (The University of Melbourne) and National ICT Australia.\\
$^3$ The Australian National University and National ICT Australia. \\
\\
Corresponding author: Pascal Van Hentenryck (pvh@nicta.com.au) \\

\begin{abstract}
  Social influence has been shown to create significant
  unpredictability in cultural markets, providing one potential
  explanation why experts routinely fail at predicting commercial
  success of cultural products. To counteract the difficulty of making
  accurate predictions, ``measure and react'' strategies have been
  advocated but finding a concrete strategy that scales for very large
  markets has remained elusive so far. 

  Here we propose a ``measure and optimize'' strategy based on an
  optimization policy that uses product quality, appeal, and social
  influence to maximize expected profits in the market at each
  decision point. Our computational experiments show that our policy
  leverages social influence to produce significant performance
  benefits for the market, while our theoretical analysis proves that
  our policy outperforms in expectation any policy not displaying
  social information.

  Our results contrast with earlier work which focused on showing the
  unpredictability and inequalities created by social influence.  Not
  only do we show for the first time that dynamically showing
  consumers positive social information under our policy increases the
  expected performance of the seller in cultural markets. We also show
  that, in reasonable settings, our policy does not introduce
  significant unpredictability and identifies ``blockbusters''.

  Overall, these results shed new light on the nature of social
  influence and how it can be leveraged for the benefits of the
  market.
\end{abstract}

\newpage
\section{Introduction}

Prediction in cultural markets is a wicked problem: On the one hand,
experts routinely fail at predicting commercial success for items such
as movies, books, and
songs~\cite{bielby1994all,caves2000creative,de2011hollywood,hirsch1972processing,peterson1971entrepreneurship}
while, on the other hand, hit products are generally considered
qualitatively superior to the less successful
ones~\cite{watts2012everything}.  If those hit products are of
superior quality, why can't experts detect them a priori? As Duncan
Watts puts it, in cultural markets ``everything is obvious ... once
you know the answer,'' as experts find it extremely hard to benefit
from their specific knowledge~\cite{eliashberg1997film}.

In their seminal paper~\cite{salganik2006experimental}, Salganik,
Dodds, and Watts offer a satisfactory explanation for this paradox:
The impact of social influence on the behavior of individuals
(word-of-mouth), a process a priori invisible to experts, may distort
the quality perceived by the customers, making quality and popularity
out of sync and expert predictions less reliable. To investigate this
hypothesis experimentally, they created an artificial music market
called the \musiclab{}.  Participants in the \musiclab{} were
presented a list of unknown songs from unknown bands, each song being
described by its name and band. The participants were divided into two
groups exposed to two different experimental conditions: The {\em
  independent} condition and the {\em social influence} condition. In
the first group (independent condition), participants were provided
with no additional information about the songs. Each participant would
decide which song to listen to from a random list. After listening to
a song, the participant had the opportunity to download it. In the
second group (social influence condition), each participant was
provided with an additional information: The number of times the song
was downloaded by earlier participants. Moreover, these participants
were presented with a list ordered by the number of
downloads. Additionally, to investigate the impact of social
influence, participants in the second group were distributed in eight
``worlds'' evolving completely independently. In particular,
participants in one world had no visibility about the downloads and
the rankings in the other worlds.

The \musiclab{} provides an experimental testbed of measuring the
unpredictability of cultural markets. By observing the evolution of
different words given the same initial conditions, the \musiclab{}
provides unique insights on the impact of social influence and the
resulting unpredictability. In particular, Salganik et al suggested
that social influence contributes to both unpredictability and
inequality of success, with follow-up experiments confirming these
initial
findings~\cite{salganik2009web,salganik2008leading,Muchnik2013,vandeRijt2014}
(see also \cite{hedstrom2006experimental} for a perspective on the
\musiclab{} paper).  These experimental results received further
theoretical support by the development of a generative model that
reproduces the experimental data \cite{krumme2012quantifying}. The
\musiclab{} model, which is presented below, teased out the role of
the quality and appeal of the songs and position bias in the
unpredictability and inequality of the experimental market.

Together these results painted a bleak scenario for cultural
markets. From the standpoint of the market, e.g., producers and
consumers of cultural products, this is an inefficient situation
\cite{shiller1992market,shleifer2000inefficient,jensen1978some}: Fewer
cultural products are consumed overall and some high-quality products
may go unnoticed, while lower-quality ones are oversold. It is thus an
important open question to determine whether this situation can be
remedied and how. In \cite{watts2012everything}, Watts advocates the
use of ``measure and react'' strategies to address the difficulty in
making correct predictions: ``Rather than predicting how people will
behave and attempting to design ways to make customers behave in a
particular way [...] we can instead measure directly how they respond
to a whole range of possibilities and react accordingly''
\cite{watts2012everything}. The \musiclab{} provides a unique setting
to validate the benefits of a ``measure and react'' strategy.

This paper presents a ``measure and optimize'' (M\&O) algorithm, that
is a ``measure and react'' algorithm which maximizes expected
downloads at each decision point. In the \musiclab{}, the only degree
of freedom is the playlist presented to an incoming participant. Each
ranking policy for the playlist produces a ``measure and react''
algorithm. The original \musiclab{} experiments presented one such
policy: To rank the songs by number of downloads. Our ``measure and
optimize'' algorithm in contrast uses a performance ranking that
maximizes the expected downloads globally for the incoming
participant, exploiting the quality and appeal of each song, the
position visibilities in the playlist, and social influence. The
performance ranking is based on two insights.

\begin{enumerate}

\item It uses the generative \musiclab{} model
  that describes the probability of sampling a song given its quality,
  appeal, position in the playlist, and the downloads of all songs at
  some point in time.

\item It solves the resulting global optimization problem in strongly
  polynomial time by a reduction to the Linear Fractional Assignment
  Problem, meaning that the algorithm scales to large instances and
  avoid exploring the $n!$ possible rankings.
\end{enumerate}

\noindent
In addition, since the quality of the songs is not known a priori, our
\mo{} algorithm learns them over time, using an often overlooked
insight by Salganik et al. \cite{salganik2006experimental}: The fact
that the popularity of a song in the independent condition is a
natural measure of its quality, capturing both its intrinsic ``value''
and the preferences of the participants.

The \mo{} algorithm was evaluated on a simulated version of the
\musiclab{} based on the model proposed in
\cite{krumme2012quantifying}; It was compared to the download ranking
policy used in the original \musiclab{} experiments
\cite{salganik2006experimental} and a random ranking policy.  These
ranking policies were evaluated both under the social influence and
independent conditions.

Regarding efficiency, the computational results
contain two important highlights:
\begin{enumerate}
\item The \mo{} algorithm produces a significant increase in expected
  downloads compared to other tested policies.

\item The \mo{} algorithm leverages both position bias and social
  influence to maximize downloads, providing substantial gains
  compared to ranking policies under the independent condition.
\end{enumerate}

\noindent
These performance results are robust, not an artifact of distributions
and parameters used in experimental setting. Indeed, our theoretical
results show that a performance-ranking policy always benefits from
social influence and outperforms all policies not using social
influence in expectation.  The performance results also shed light on
recent results that demonstrated the benefits of position bias in the
absence of social influence \cite{Lerman2014}. Our computational
results show that a performance-ranking policy can leverage both
social influence and position visibility and that the improvements in
expected downloads are cumulative.

Regarding unpredictability, our results are also illuminating. They
highlight that the performance-ranking policy often identifies
``blockbusters'', which are the primary sources of the increased
market efficiency. This provides a sharp contrast to the download
ranking, which cannot identify these blockbusters reliably. Moreover,
even in that worst-case scenario, our computational results show that
the performance ranking brings significant benefits over the download
ranking.

Finally, it is important to emphasize that the performance-ranking
policy is general-purpose and can be applied to a variety of cultural
markets, including books and movies, where product appeal, position
visibility, and social influence can be optimized jointly.

\section{Results} 

We first describe the \musiclab{} model proposed in
\cite{krumme2012quantifying}, the experimental setting, and a
presentation of the various ranking policies. We then show how to
estimate the qualities of the songs. We then report the core
computational results about the performance and unpredictability of
the various ranking policies. We finally prove that social influence
always helps the performance ranking in maximizing the expected
downloads. As a result, the performance ranking under social influence
outperforms all ranking policies under the independent condition.

\subsection{The  \musiclab{} Model}
\label{section-predictive-model}

The descriptive model for the \musiclab{} \cite{krumme2012quantifying}
is based on the data collected during the actual experiments and is
accurate enough to reproduce the conclusions in
\cite{salganik2006experimental} through simulation. The model is
defined in terms of a market composed of $n$ songs. Each song $i \in
\{1,\ldots,n\}$ is characterized by two values:
\begin{enumerate}
\item Its {\em appeal} $A_i$ which represents the inherent preference
  of listening to song $i$ based only on its name and its band;

\item Its {\em quality} $q_i$ which represents the conditional
  probability of downloading song $i$ given that it was sampled.
\end{enumerate}
The \musiclab{} experiments present each participant with a playlist
$\pi$, i.e., a permutation of $\{1,\ldots,n\}$. Each position $p$ in
the playlist is characterized by its {\em visibility} $v_p$ which is
the inherent probability of sampling a song in position $p$. Since the
playlist $\pi$ is a bijection from the positions to the songs, its
inverse is well-defined and is called a ranking. Rankings are denoted
by the letter $\sigma$ in the following, $\pi_i$ denotes the song in
position $i$ of the playlist $\pi$, and $\sigma_i$ denotes the
position of song $i$ in the ranking $\sigma$. Hence $v_{\sigma_i}$
denotes the visibility of the position of song $i$. The model
specifies the probability of listening to song $i$ at time $k$ given a
playlist $\sigma$ as
\[
p_{i,k}(\sigma) =  \frac{v_{\sigma_i}(\alpha A_i+D_{i,k})}{\sum_{j=1}^n v_{\sigma_j}(\alpha A_j+D_{j,k})},
\]
where $D_{i,k}$ is the number of downloads of song $i$ at time $k$ and
$\alpha > 0$ is a scaling factor which is the same for all
songs. Observe that the probability of sampling a song depends on its
position in the playlist, its appeal, and the number of downloads at
time $k$. As an experiment proceeds, the number of downloads dominates
the appeal of the song at a rate dictated by $\alpha$. 

When simulating the independent condition, the download counts
$D_{i,k}$ in the probability $p_{i,k}(\sigma)$ becomes zero and hence
the probability of listening to song $i$ only depends on the appeals
and visibilities of the songs. We obtain a probability
\[
p_{i}(\sigma) =  \frac{v_{\sigma_i} A_i}{\sum_{j=1}^n v_{\sigma_j}A_j}
\]

\subsection{The Experimental Setting}
\label{section-experimental-setting}

The experimental setting in this paper uses an agent-based simulation
to emulate the \musiclab{}. Each simulation consists of $N$
iterations and, at each iteration $k$,

\begin{enumerate}
\item the simulator randomly selects a song $i$ according to the
  probability distribution specified by the $p_{j.k}$'s.

\item the simulator randomly determines, with probability $q_i$,
  whether selected song $i$ is downloaded; If the song is downloaded,
  then the simulator increases the download counts of song $i$, i.e.,
  $D_{i,k+1} = D_{i,k} + 1$.
\end{enumerate}

\noindent
The ``measure and react'' algorithms studied in this paper monitors
the above process (e.g., to estimate song quality). Every $r$
iterations, the algorithms compute a new playlist $\sigma$ using one
of the ranking policies described below.  For instance, in the social
influence condition of the original \musiclab{} experiments, the
policy ranks the songs in decreasing order of download counts. The
parameter $r \geq 1$ is called the refresh rate.

\begin{figure}[t]
\begin{centering}
\includegraphics[width=0.4\linewidth]{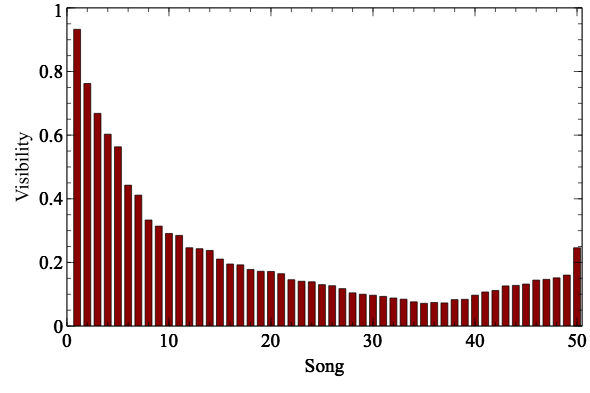}
\end{centering}
\vspace{-0.5cm}
\centering{}
\caption{The visibility $v_p$ (y-axis) of position $p$ in the
  playlist (x-axis) where $p = 1$ is the top position and $p = 50$ is
  the bottom position of the playlist which is displayed in a single column.}
\label{fig:visibility}
\end{figure}

\begin{figure}[t]
\begin{centering}
\includegraphics[width=0.45\linewidth]{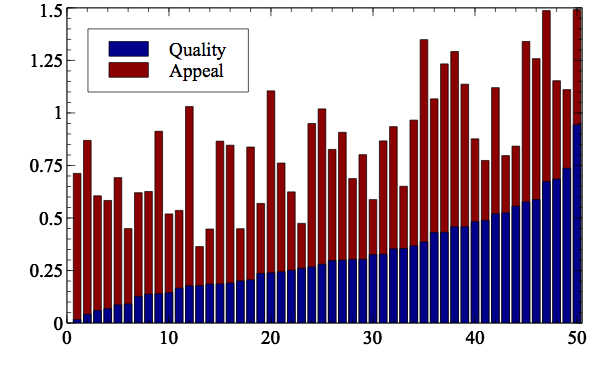}
\includegraphics[width=0.45\linewidth]{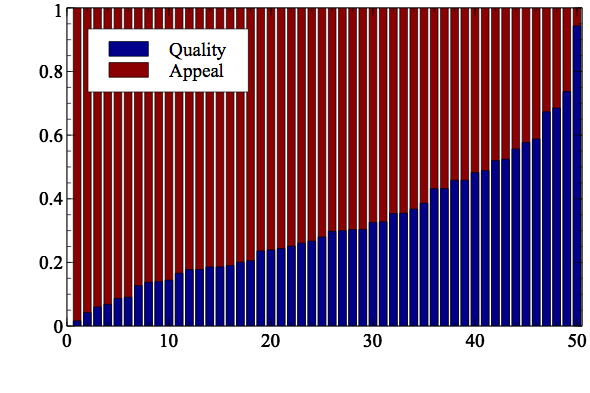}
\par\end{centering}
\vspace{-0.5cm}
\centering{}
\caption{ The Quality $q_i$ (blue) and Appeal $A_i$ (red) of Song $i$
  in the two settings. In the first setting, the quality and the
  appeal for the songs were chosen independently according to a
  Gaussian distribution normalized to fit between $0$ and $1$. The
  second setting explores an extreme case where the quality is
  negatively correlated with the appeal.}
\label{fig:qualityAppeal}
\end{figure}

The experimental setting tries to be close to the \musiclab{}
experiments and considers 50 songs and simulations with 20,000
steps. The songs are displayed in a single column. The analysis in
\cite{krumme2012quantifying} indicated that participants are more
likely to sample songs higher in the playlist. More precisely, the
visibility decreases with the playlist position, except for a slight
increase at the bottom positions. Figure \ref{fig:visibility} depicts
the visibility profile based on these guidelines, which is used in all
computational experiments in this paper.

The paper also uses two settings for the quality and appeal of each
song, which are depicted in Figure \ref{fig:qualityAppeal}:
\begin{enumerate}
\item In the first setting, the quality and the appeal for the songs were chosen
independently according to a Gaussian distribution normalized to fit
between $0$ and $1$. 

\item The second setting explores an extreme case where
the quality is negatively correlated with the appeal.
\end{enumerate}
The experimental results were obtained by running $W=400$ simulations.

\subsection{The Ranking  Policies}
\label{section-rankings}

This section reviews a number of ranking policies used to influence
the participants. Each policy updates the order of the songs in the
playlist and thus defines a different ``measure and react'' algorithm that uses
the policy to present a novel playlist after each refresh.

The first policy is called the {\em download ranking}
\cite{salganik2006experimental}: At iteration $k$, it simply orders
the songs by the number of downloads $D_{i,k}$ and assigns the
positions $1..n$ in that order to obtain the $\sigma_k^d$.  Note that
downloads do not necessarily reflect the inherent quality of the
songs, since they depend on how many times the songs were listened to.

The second policy is the {\em performance ranking} which maximizes the
expected number of downloads at each iteration, exploiting all the
available information globally, i.e., the appeal, the visibility, the
downloads, and the quality of the songs.

\begin{defn}\label{def:opt} The performance ranking $\sigma_k^*$ at iteration
  $k$ is defined as
\[
\sigma_k^* = \arg\max_{\sigma\in S_n} \sum_{i=1}^n p_{i,k}(\sigma)\cdot q_i.
\]
\end{defn}

\noindent
The performance ranking uses the probability $p_{i,k}(\sigma)$ of
sampling song $i$ at iteration $k$ given ranking $\sigma$, as well as
the quality $q_i$ of song $i$. Obviously, in practical situations, the
song qualities are unknown. However, as mentioned in the introduction,
Salganik et al. \cite{salganik2006experimental} argue that the
popularity of a song (at least in the independent condition) can be
used as a measure of its quality. Hence, the quality $q_{i}$ in the
above expression can be replaced by its approximation $\hat{q}_{i,k}$
at iteration $k$ as discussed later in the paper.

It is not obvious a priori how to compute the performance ranking: In
the worst case, there are $n!$ rankings to explore. Fortunately, the
performance ranking can be computed efficiently (in strongly
polynomial time) through a reduction to the \emph{Linear Fractional
  Assignment Problem}.

\begin{theorem}
The performance ranking can be computed in strongly polynomial time.
\end{theorem}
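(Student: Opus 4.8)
The plan is to recognize the performance-ranking objective as an instance of the Linear Fractional Assignment Problem (LFAP) and then invoke a strongly polynomial algorithm for that problem. The first step is a reformulation that exposes the linear structure hidden inside $p_{i,k}(\sigma)$. Writing $w_i = \alpha A_i + D_{i,k}$ for the weight of song $i$ (fixed at iteration $k$), the common denominator of the sampling probabilities factors out of the sum, so that
\[
\sum_{i=1}^n p_{i,k}(\sigma)\, q_i = \frac{\sum_{i=1}^n v_{\sigma_i}\, w_i\, q_i}{\sum_{j=1}^n v_{\sigma_j}\, w_j}.
\]
Although $\sigma$ appears in both numerator and denominator, each is a linear function of the underlying assignment. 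Introducing binary variables $x_{ip}$ equal to $1$ when song $i$ occupies position $p$ (so that $x$ ranges over permutation matrices), the numerator becomes $\sum_{i,p} x_{ip}\, v_p w_i q_i$ and the denominator $\sum_{i,p} x_{ip}\, v_p w_i$. Thus the objective is a ratio of two linear forms over the assignment polytope, with cost coefficients $c_{ip} = v_p w_i q_i$ and $d_{ip} = v_p w_i$, which is exactly the LFAP.

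The second step is to confirm the structural hypothesis that makes fractional programming tractable: the denominator must keep a constant sign over all feasible assignments. This holds here because every visibility satisfies $v_p > 0$ and every weight satisfies $w_i = \alpha A_i + D_{i,k} > 0$ (as $\alpha > 0$, $A_i > 0$, and $D_{i,k} \ge 0$), so $\sum_{j} v_{\sigma_j} w_j > 0$ for every $\sigma$. With the denominator bounded away from zero, the objective is well defined and the classical Dinkelbach/Newton parametrization applies.

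For the third step I would solve the LFAP by the parametric (Newton--Dinkelbach) method. For a parameter $\lambda$, define $g(\lambda) = \max_x \sum_{i,p} x_{ip}(c_{ip} - \lambda d_{ip})$; this is an ordinary linear assignment problem with modified costs, solvable in strongly polynomial time by the Hungarian algorithm, and $g$ is a convex, decreasing, piecewise-linear function whose unique root $\lambda^*$ equals the optimal value of the fractional objective, the associated assignment being the performance ranking. The remaining point is to bound the number of parametric iterations independently of the numerical data, which is supplied by Radzik's theorem guaranteeing that Newton's method for linear fractional combinatorial optimization terminates after a strongly polynomial number of iterations. Combining a strongly polynomial iteration count with a strongly polynomial assignment solver at each iteration yields a strongly polynomial algorithm overall.

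The main obstacle is conceptual rather than computational: it is the reformulation in the first step. At first glance the shared normalizing denominator in $p_{i,k}(\sigma)$ makes the objective look nonlinear in the ranking, and it is only after summing over $i$ and factoring the common denominator that both numerator and denominator reveal themselves as linear functions of the assignment matrix. Once this linearization is in place, the reduction to LFAP is immediate and strong polynomiality follows from known results, so no genuinely new algorithmic analysis is required beyond invoking the appropriate fractional-assignment machinery.
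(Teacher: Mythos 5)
Your proof is correct and follows essentially the same route as the paper: you rewrite $\sum_i p_{i,k}(\sigma)q_i$ as a ratio of two linear functions of the assignment with coefficients $c_{ip}=v_p w_i q_i$ and $d_{ip}=v_p w_i$, which is exactly the paper's reduction to the Linear Fractional Assignment Problem (the paper merely negates the costs to phrase it as a minimization). The only difference is that where the paper simply cites known strongly polynomial LFAP algorithms, you sketch one concretely (Newton--Dinkelbach with Radzik's iteration bound and the Hungarian algorithm per step) --- a valid and somewhat more self-contained way to discharge that final citation, and one whose parametric function $g(\lambda)$ incidentally matches the function $f(\lambda)$ the paper itself uses later in its analysis of social influence.
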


\begin{proof}
  We show that the performance ranking problem can be polynomially
  reduced to the Linear Fractional Assignment Problem (LFAP). The
  LFAP, which we state below, is known to be solvable in polynomial
  time in the strong sense
  \cite{shigeno1995algorithm,kabadi2008strongly}.

  A LFAP input is a bipartite graph $G=(V_1,V_2,E)$ where
  $|V_1|=|V_2|=n$, and a cost $c_{ij}$ and a weight $d_{ij}>0$ for
  each edge $(i,j)\in E$. The LFAP amounts to finding a perfect
  matching $M$ minimizing
\[
\frac{\sum_{(i,j)\in M}c_{ij}}{\sum_{(i,j)\in M}d_{ij}}.
\]
To find the performance ranking, it suffices to solve the LFAP problem
defined over a bipartite graph where $V_1$ represents the songs, $V_2$
represents the positions, $c_{i,j}= -v_j(\alpha A_i + D_i)q_i$ and
$d_{ij} = v_j(\alpha A_i + D_i)$. This LFAP instance determines a
perfect matching $M$ such that $\frac{- \sum_{(i,j)\in M} v_j(\alpha
  A_i + D_i)q_i}{\sum_{(i,j)\in M} v_j(\alpha A_i + D_i)}$ is
minimized. The result follows since this is a polynomial reduction.
\end{proof}

\noindent
In the following, we use {\sc D-rank} and {\sc P-rank} to denote the
``measure and react'' algorithms using the download and performance
rankings respectively. We also annotate the algorithms with either
{\sc SI} or {\sc IN} to denote whether they are used under the social
influence or the independent condition. For instance, {\sc P-rank(SI)}
denotes the algorithm with the performance ranking under the social
influence condition, while {\sc P-rank(IN)} denotes the algorithm with
the performance ranking under the independent condition. We also use
{\sc rand-rank} to denote the algorithm that simply presents a random
playlist at each iteration. Our proposed \mo{} algorithm is {\sc
  P-rank(SI)}.

Under the independent condition, the optimization problem is the same
at each iteration as the probability $p_{i}(\sigma)$ does not change
over time. Since the performance ranking maximizes the expected
downloads at each iteration, it dominates all other ``measure and
react'' algorithms under the independent condition.

\begin{proposition}[Optimality of Performance Ranking] Given specified
  songs appeal and quality, {\sc P-rank(IN)} is the optimal ``measure
  and react'' algorithm under the independent condition.
\end{proposition}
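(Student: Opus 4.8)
The plan is to show that, under the independent condition, the total expected number of downloads over the $N$ iterations decomposes additively across iterations, and that {\sc P-rank(IN)} maximizes every term of this decomposition simultaneously. The crucial structural fact is that, in the independent condition, the sampling probability $p_i(\sigma) = v_{\sigma_i} A_i / \sum_j v_{\sigma_j} A_j$ depends neither on the iteration index $k$ nor on the download history, since the counts $D_{i,k}$ no longer enter the probabilities. Consequently, the ranking used at iteration $k$ influences only the expected downloads collected at that very iteration, and affects neither the past nor the future.

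Formally, I would model an arbitrary ``measure and react'' algorithm as a (possibly randomized) policy that, at iteration $k$, selects a ranking $\sigma^{(k)}$ as a function of the observed history up to time $k$. The expected number of downloads contributed by iteration $k$, conditioned on the ranking $\sigma^{(k)}$ used there, equals $\sum_{i=1}^n p_i(\sigma^{(k)}) q_i$, because song $i$ is sampled with probability $p_i(\sigma^{(k)})$ and then downloaded with probability $q_i$. By linearity of expectation, the total expected downloads equal $\sum_{k=1}^N \mathbb{E}\bigl[\sum_{i=1}^n p_i(\sigma^{(k)}) q_i\bigr]$.

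I would then bound each summand from above. For any realization of $\sigma^{(k)}$ we have $\sum_i p_i(\sigma^{(k)}) q_i \le \max_{\sigma \in S_n} \sum_i p_i(\sigma) q_i = \sum_i p_i(\sigma^*) q_i$, where $\sigma^*$ is the (iteration-independent) performance ranking of Definition~\ref{def:opt} specialized to the independent condition. Taking expectations and summing over $k$ shows that no policy can exceed $N \cdot \sum_i p_i(\sigma^*) q_i$ in expectation. Since {\sc P-rank(IN)} deploys the fixed ranking $\sigma^*$ at every iteration---the optimization problem being identical across iterations, the refresh rate is immaterial---it attains this upper bound exactly, and is therefore optimal.

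The main obstacle is conceptual rather than computational: one must argue that the freedom to \emph{adapt} to the observed history grants an arbitrary ``measure and react'' algorithm no advantage. This follows precisely because the independent condition decouples the iterations---future sampling probabilities are insensitive to past downloads---so that greedily maximizing the per-iteration expected download is globally optimal. The only care needed is to phrase the argument over all randomized, history-dependent policies, which the conditioning-and-linearity decomposition above handles cleanly.
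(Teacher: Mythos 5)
Your proof is correct and takes essentially the same route as the paper, which justifies the proposition with exactly this observation: under the independent condition the sampling probabilities $p_i(\sigma)$ are independent of time and of the download history, so the optimization problem is identical at every iteration and the per-iteration greedy maximizer dominates every other ``measure and react'' algorithm. The paper states this in two informal sentences immediately before the proposition (with no formal proof environment); your conditioning-and-linearity decomposition over randomized, history-dependent policies is just a careful formalization of that same argument.
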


\subsection{Recovering the Songs Quality}
\label{section-recovering-quality}

This section describes how to recover songs quality in the
\musiclab{}. It shows that ``measure and react'' algorithms quickly
recover the real quality of the songs.

As mentioned several times already, Salganik et
al. \cite{salganik2006experimental} stated that the popularity of a
song in the independent condition is a natural measure of its quality
and captures both its intrinsic ``value'' and the preferences of the
participants. To approximate the quality of a song, it suffices to
sample the participants. This can be simulated by using a Bernoulli
sampling based on real quality of the songs. The predicted quality
$\hat{q}_i$ of song $i$ is obtained by running $m$ independent
Bernoulli trials with probability $q_i$ of success, i.e.,
\[
\hat{q}_i=\frac{k}{m},
\]
where $k$ is the number of successes over the $m$ trials.  The law of
large numbers states that the prediction accuracy increases with the
sampling size. For a large enough sampling size, $\hat{q}_i$ has a
mean of $q_i$ and a variance of $q_i(1-q_i)$. This variance has the
desirable property that the quality of a song with a more 'extreme'
quality (i.e., a good or a bad song) is recovered faster than those
with average quality.  In addition, ``measure and react'' algorithms
merge additional information about downloads into the prediction as
the experiment proceeds: At step $k$, the approximate quality of song
$i$ is given by
\[
\hat{q}_{i,k} = \frac{\hat{q}_{i,0} \cdot m + D_{i,k}}{m + S_{i,k}},
\]
where $m$ is the initial sample size and $S_{i,k}$ the number of
samplings of song $i$ up to step $k$.

\begin{figure}[ht]
\begin{centering}
\includegraphics[width=0.8\linewidth]{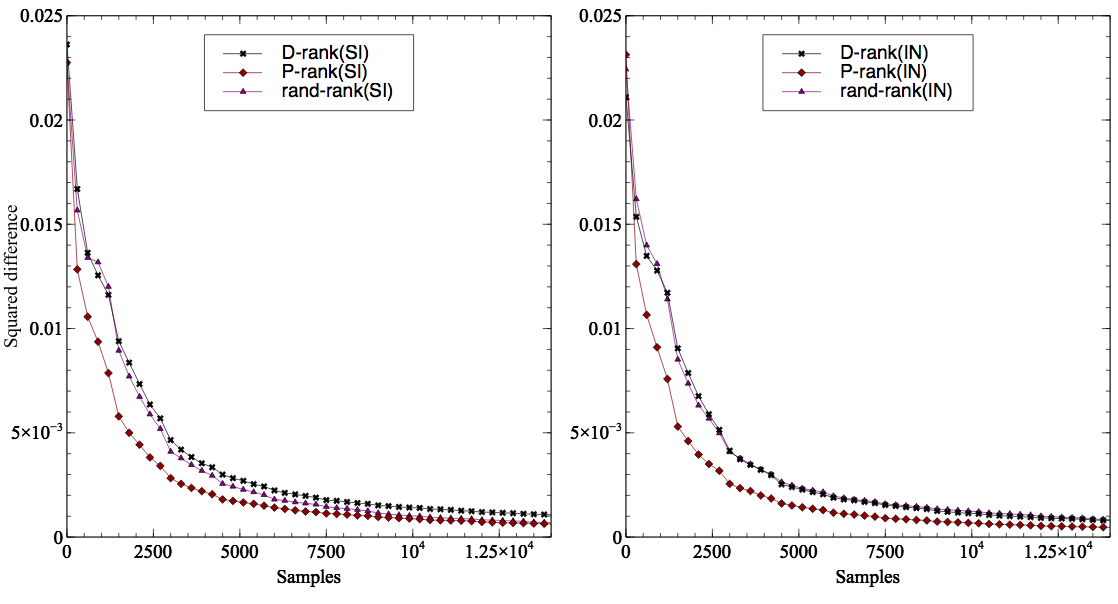}
\par\end{centering}
\centering{}
\caption{Average Squared Difference of Inferred Quality over Time for
  Different Rankings for the top 10 quality songs. The figure reports the
  average squared difference $ \sum_{i=1}^{n} \frac{(\hat{q}_{i,k} -
    q_{i})^2}{n} $ between the song quality and their predictions for
  the various ranking policies under the social influence and the
  independent conditions.}
\label{fig:accuracy_incom2}
\end{figure}

Figure \ref{fig:accuracy_incom2} presents experimental results about
the accuracy of the quality approximation for different rankings,
assuming an initial sampling set of size 10. More precisely, the figure
reports the average squared difference
\[
\sum_{i=1}^{n} \frac{(\hat{q}_{i,k} - q_{i})^2}{n}
\]
between the song quality and their predictions for the various ranking
policies under the social influence and the independent conditions. In
all cases, the results indicate that the songs quality is recovered
quickly and accurately. Observe that the performance ranking recovers
the song quality faster than the other rankings, both in the social
influence and in the independent conditions.

\subsection{Expected Performance}
\label{section-expected-performance}

This section studies the expected performance of the various rankings
under the social influence and the independent conditions. The
performance ranking uses the quality approximation presented in the
last section (instead of the ``true'' quality). Figures
\ref{fig:down_incom} and \ref{fig:down_incom2} depict the average
number of downloads for the various rankings given the quality and
appeal data depicted in Figure \ref{fig:qualityAppeal}.

\begin{figure}[t]
\begin{centering}
\includegraphics[width= 0.8 \linewidth]{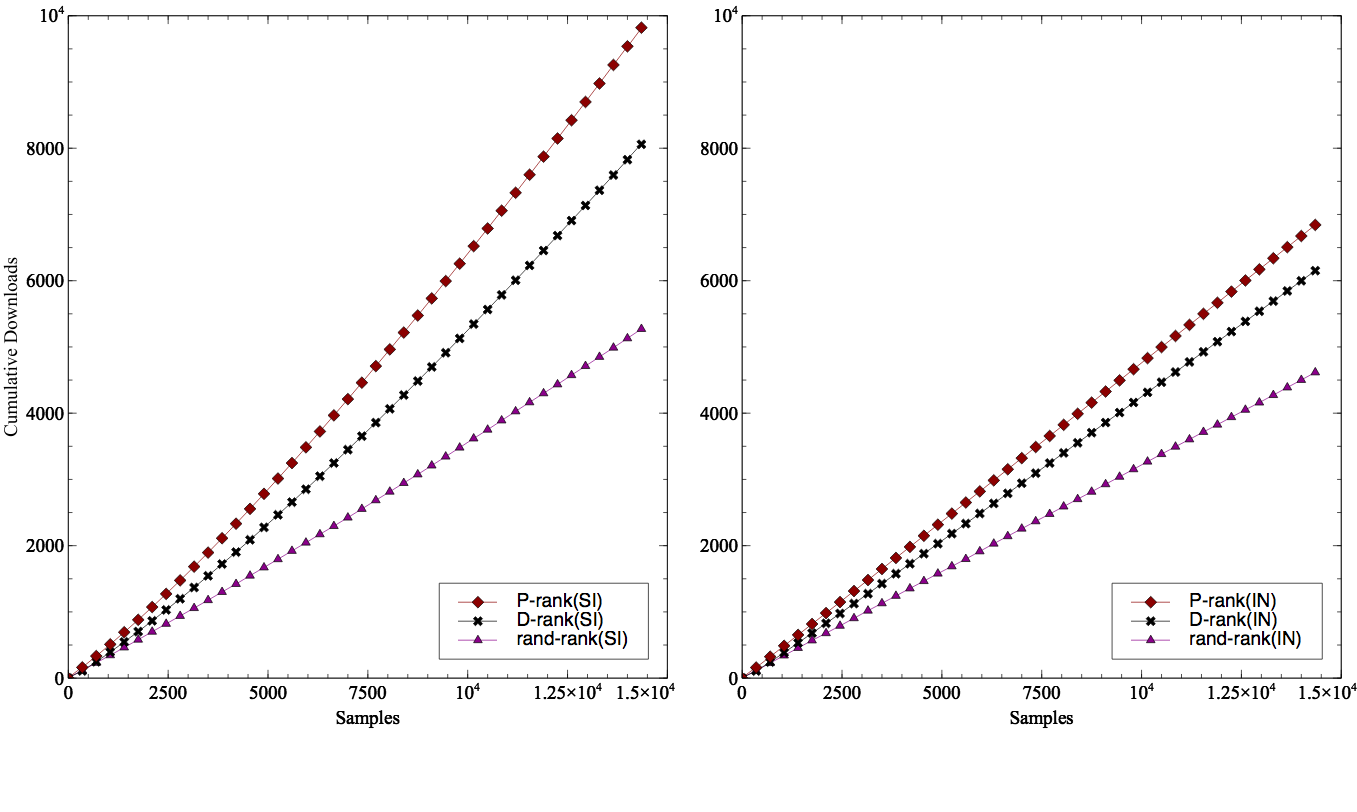}
\par\end{centering}
\vspace{-0.5cm}
\centering{}
\caption{The Number of Downloads over Time for the Various Rankings
  (Independent Gaussian Setting). The x-axis represents the number of song samplings
  and the y-axis represents the average number of downloads over all
  experiments. The left figure depicts the results for the social
  influence condition and the right figure for the independent
  condition.}
\label{fig:down_incom}
\end{figure}

\begin{figure}[t]
\begin{centering}
\includegraphics[width=0.8 \linewidth]{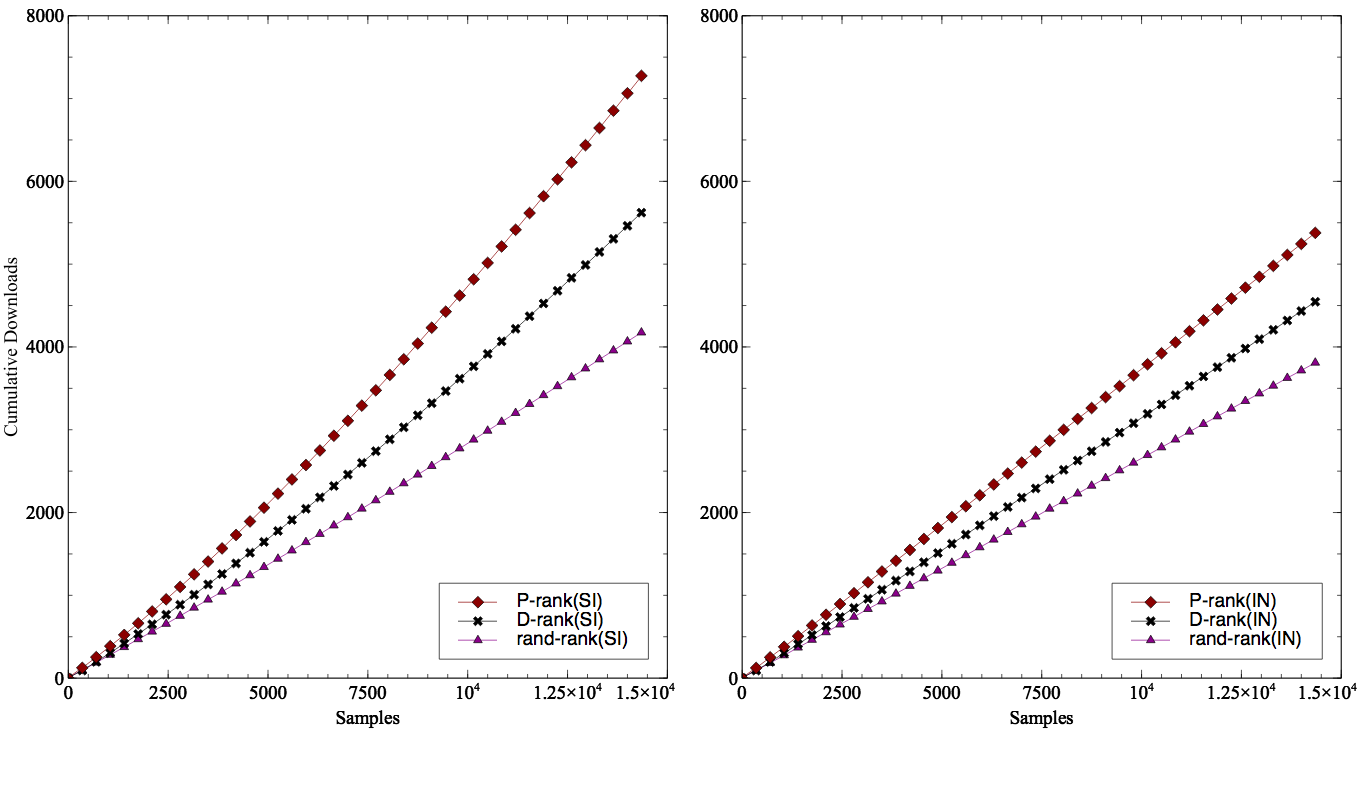}
\par\end{centering}
\centering{}
\vspace{-0.5cm}
\caption{The Number of Downloads over Time for the Various Rankings
  (Negative Correlation Setting). The x-axis represents the number of
  song samplings and the y-axis represents the average number of
  downloads over all experiments. The left figure depicts the results
  for the social influence condition and the right figure for the
  independent condition.}
\label{fig:down_incom2}
\end{figure}

The experimental results highlight three significant findings:
\begin{enumerate}

\item The {\em performance ranking} provides substantial gains in
  expected downloads compared to the {\em download ranking} and the
  {\em random ranking} policies, both in the social influence and
  the independent conditions.

\item The benefits of the performance ranking are particularly
  striking under the social influence condition. Social influence
  creates a Matthew effect \cite{MatthewBook} for the number of
  expected downloads.

\item The gain of the performance ranking is also more significant on
  the worst-case scenario, where the song appeal is negatively
  correlated with the song quality. The expected number of downloads
  is lower in this worst-case scenario, but the shapes of the curves
  are in fact remarkably similar in the social influence and
  independent conditions.
\end{enumerate}

\noindent
In summary, the performance ranking produces significant improvements
in expected downloads over the download ranking, which also dominate
the random ranking. The benefits come both from position bias and
social influence which provide cumulative improvements.

\subsection{Market Unpredictability}

Figure \ref{fig:unpred1} depicts the unpredictability results using
the measure proposed in \cite{salganik2006experimental}. The
unpredictability $u_i$ of song $i$ is defined as the average
difference in market share for that song between all pairs of
realizations, i.e.,
\[
u_i = \sum^W_{w=1}\sum^W_{w'=w+1}|m_{i,w}-m_{i,w'}|/\binom{W}{2},
\]
where $m_{i,w}$ is the market share of song $i$ in world $w$, i.e.,
\[
m_{i,j} = D_{i,N}^j /\sum^n_{k=1} D_{i,N}^j
\]
where $D_{i,t}^w$ is the number of downloads of song $i$ at step $t$
in world $w$. The overall unpredictability measure is the average of
this measure over all $n$ songs, i.e.,
\[
U= \sum^n_{j=1}u_i/n.
\]

\begin{figure}[t]
\begin{centering}
\includegraphics[width=0.47\linewidth]{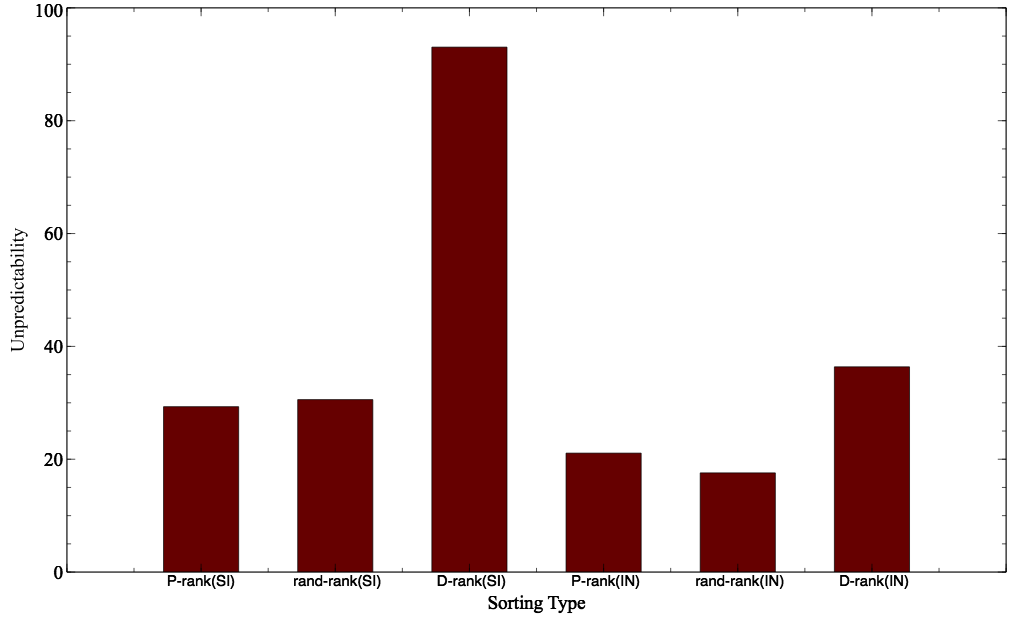}
\includegraphics[width=0.47\linewidth]{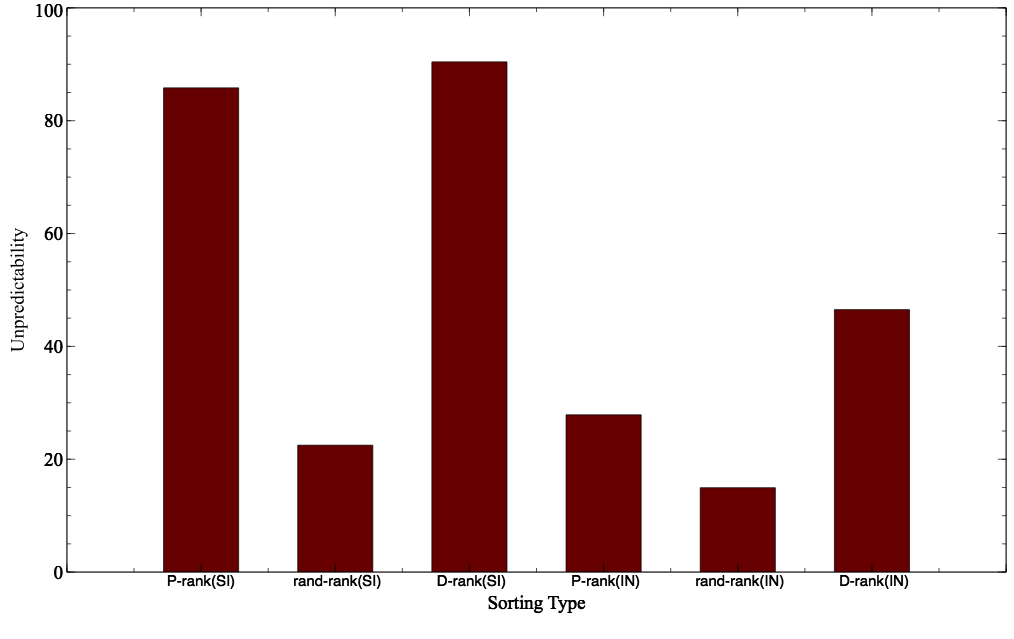}
\par\end{centering}
\centering{}
\vspace{-0.5cm}
\caption{The Unpredictability of the Rankings for the Unpredictability
  Measure of \cite{salganik2006experimental}. The measure of
  unpredictability $u_i$ for song $i$ is defined as the average
  difference in market share for that song between all pairs of
  realizations, i.e., $ u_i =
  \sum^W_{j=1}\sum^W_{k=j+1}|m_{i,j}-m_{i,k}|/\binom{W}{2}$, where
  $m_{i,j}$ is the market share of song $i$ in world $j$.  The overall
  unpredictability measure is the average of this measure over all $n$
  songs, i.e., $ U= \sum^n_{j=1}u_i/n.  $. The left figure depicts the
  results for the first setting, while the right figure depicts the
  results for the second setting.}
\label{fig:unpred1}
\end{figure}

\noindent
The figure highlights two interesting results.
\begin{enumerate}
\item Exploiting social influence does not necessarily create an
  unpredictable market.  In the first setting, the performance ranking
  introduces only negligible unpredictability in the market. This
  obviously contrasts with the download ranking, which introduces
  significant unpredictability.

\item An unpredictable market is not necessarily an inefficient
  market. In the second setting, the performance ranking exploits
  social influence to provide significant improvements in downloads
  although, as a side-effect, it creates more
  unpredictability. Subsequent results will explain the source of this
  unpredictability.
\end{enumerate}

\subsection{Downloads Versus Quality}

\begin{figure}[t]
\begin{centering}
\includegraphics[width=0.7 \linewidth]{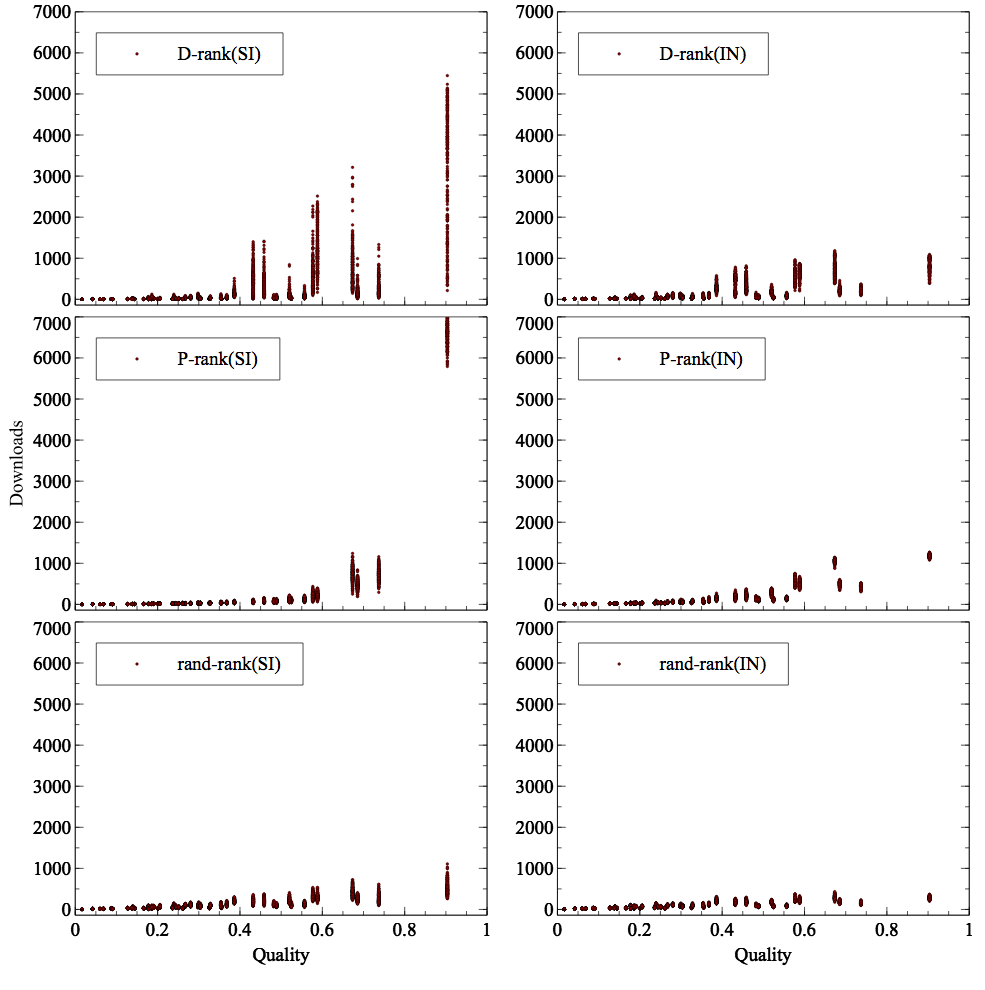}
\par\end{centering}
\centering{}\caption{The Distribution of Downloads Versus the True
  Song Qualities (First Setting). The songs on the x-axis are ranked
  by increasing quality from left to right. Each dot is the number of
  downloads of a song in an experiment.}
\label{fig:distribution}
\label{fig:Dist1}
\end{figure}

\begin{figure}[t]
\begin{centering}
\includegraphics[width=0.7 \linewidth]{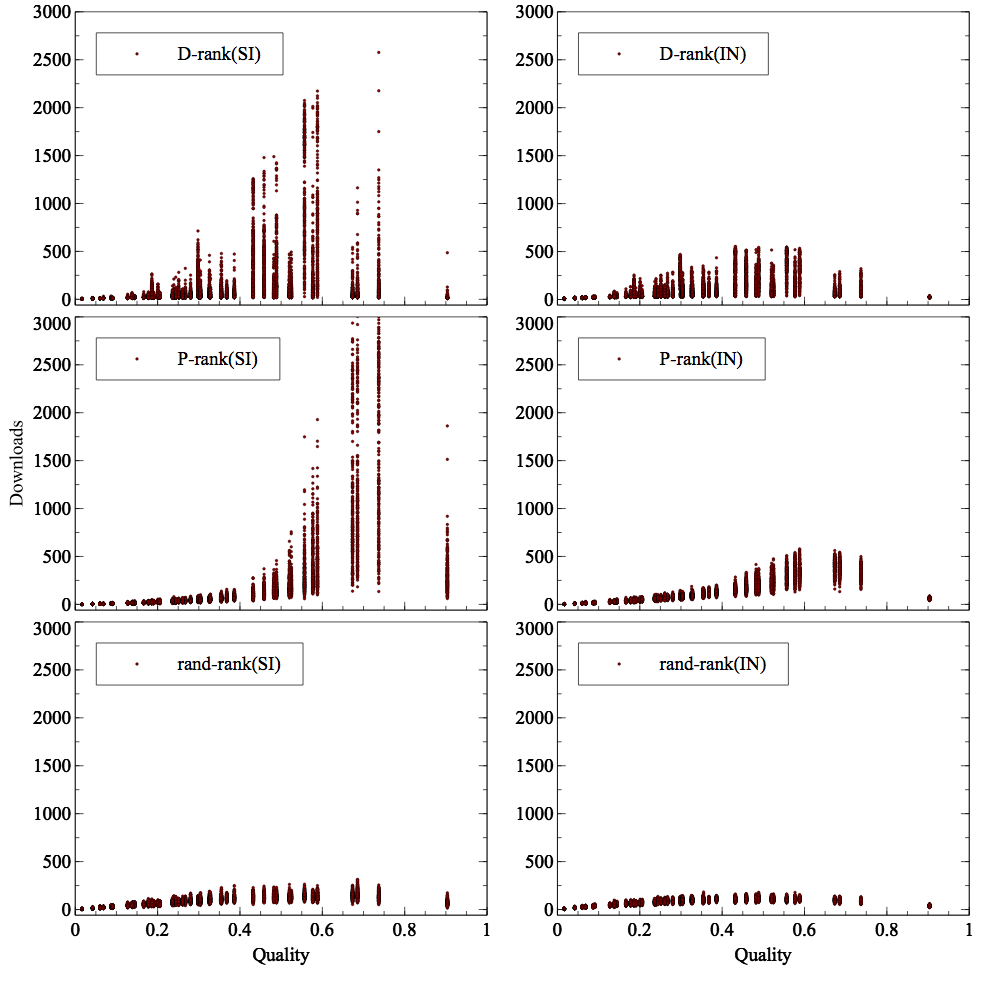}
\par\end{centering}
\centering{}\caption{The Distribution of Downloads Versus the True
  Song Qualities (Second Setting). The songs on the x-axis are ranked
  by increasing quality from left to right. Each dot is the number of
  downloads of a song in an experiment.}
\label{fig:distribution2}
\label{fig:Dist2}
\end{figure}

Figures \ref{fig:Dist1} and \ref{fig:Dist2} report the correlation
between number of downloads of a song (y-axis) and its true quality
(x-axis) for each of the rankings. More precisely, the figures show
the distributions of the downloads of every song over the simulations.
The songs are ranked in increasing quality on the x-axis and there are
400 dots for each song, representing the number of downloads in each
of the experiments. As a result, the figures give a clear visual
representation of what happens in the \musiclab{} for the various
rankings, including how the downloads are distributed and the sources
of unpredictability.

The results are particularly interesting. In the first setting
depicted in Figure \ref{fig:Dist1}, {\sc P-rank(SI)} clearly isolates
the best song, which has an order of magnitude more downloads than
other songs. When comparing with {\sc P-rank(SI)} and {\sc
  P-rank(IN)}, we observe that the additional market efficiency
produced by {\sc P-rank(SI)} primarily comes from the downloads of
this best song. In contrast, the download ranking has a lot more
unpredictability and has significant download counts for many other
songs, even for some with considerably less quality. The variance of
downloads for the best song is substantial, indicating that there are
simulations in which this best song has very few downloads. This never
happens with the performance ranking.

The results in Figure \ref{fig:Dist2} highlight the source of
unpredictability in the performance ranking. These results are for the
second setting where the appeal is negatively correlated with the
quality. The results indicate that the performance ranking is not
capable of isolating the best song in this setting. There are
significant downloads of the next high-quality songs however. This
also contrasts with the download ranking where lower-quality songs can
have substantial downloads. It is interesting to point out that the
unpredictability of the performance ranking in this setting comes from
songs which are not easy to distinguish: These songs have very similar
values for quality and appeal.

These results shed interesting light on social influence. First, the
download ranking clearly shows that social influence can transform
average songs into ``hits''. However, this behavior is entirely
eliminated by the performance ranking in the first setting: The
performance ranking clearly leverages social influence to isolate the
``blockbuster''. The second setting highlights a case where even the
performance ranking cannot recover the best song. This setting is of
course a worst-case scenario: A terrific product with a poor
appeal. However, the performance ranking still leverages social
influence to promote high-quality songs: Social influence under the
performance ranking is much less likely to turn an average song into a
``hit'' compared to the download ranking.

\subsection{The Benefits of Social Influence}

We now prove that social influence always increases the number of
expected downloads of the performance ranking. We show that this is
true regardless of the number of songs, their appeal and quality, and
the visibility values. The proof assumes that the quality of the songs
has been recovered, i.e., we use $q_i$ instead of $\hat{q}_{i,k}$.
Also, for simplifying the notations, we use $a_{i}=\alpha
A_{i}+D_{i,t}$ and assume without loss of generality that
$v_{1}\geq\cdots\geq v_{n}>0$.

We are interested in showing that the expected number of downloads
increases over time for the performance ranking under the social
influence condition. In state $t$, the probability that song $i$ is
downloaded for ranking $\sigma$ is
\[
\frac{v_{\sigma_i}a_{i}q_{i}}{\sum
  v_{\sigma_i}a_{i}}.
\]
Hence, the expected number of downloads at time $t$ for ranking
$\sigma$ is defined as
\[
\mathbb{E}[D_{t}^{\sigma}]=\frac{\sum_{i}v_{\sigma_i}a_{i}q_{i}}{\sum_{i}v_{\sigma_i}a_{i}}.
\]
Under the social influence condition, the number of expected downloads
over time can be considered as a Markov chain where state $t+1$ only
depends on state $t$. Therefore the expected number of downloads at
time $t+1$ conditional to time $t$ if ranking $\sigma$ and $\sigma'$
are used at time $t$ and $t+1$ respectively is
\[
\mathbb{E}[D_{t+1}^{\sigma,\sigma'}]=\sum_{j}\left(\frac{v_{\sigma_j}a_{j}q_{j}}{\sum
    v_{\sigma_i}a_{i}}\cdot\frac{\sum_{i\not=j}v_{\sigma'_i}a_{i}q_{i}+v_{\sigma'_j}(a_{j}+1)q_{j}}{\sum_{i\not=j}v_{\sigma'_i}a_{i}+v_{\sigma'_j}(a_{j}+1)}\right)+\left(1-\mathbb{E}[D_{t}^{\sigma}]\right)\cdot
\mathbb{E}[D_{t}^{\sigma}].
\]
Consider the performance ranking now. At step $t$, the performance ranking finds
a permutation $\sigma^{*}\in S_{n}$ such that
\begin{equation}
\mathbb{E}[D_{t}^{\sigma^*}] = \frac{\sum_{i}v_{\sigma^*_i}a_{i}q_{i}}{\sum_{i}v_{\sigma^*_i}a_{i}} =
\max_{\sigma} \frac{\sum_{i}v_{\sigma_i}a_{i}q_{i}}{\sum_{i}v_{\sigma_i}a_{i}}. \label{eq:optp}
\end{equation}
By optimality of the performance ranking at each step, we have that
\begin{equation}
\mathbb{E}[D_{t+1}^{\sigma^*,\sigma^{**}}] \geq \mathbb{E}[D_{t+1}^{\sigma^*,\sigma^*}] \label{eq:same}
\end{equation}
where $\sigma^{**}$ is the performance ranking at step $t+1$. Hence, to show that
social influence is always beneficial to the performance ranking, it suffices to
show that 
\[
\mathbb{E}[D_{t+1}^{\sigma^*,\sigma^*}] \geq \mathbb{E}[D_{t}^{\sigma^*}].
\]
The proof is simpler when reasoning about the playlist obtained by the
performance ranking. At step $t$, the performance ranking finds a
playlist $\pi^{*}\in S_{n}$ such that
\begin{equation}
\frac{\sum_{p}v_{p}a_{\pi^*_p}q_{\pi^*_p}}{\sum_{p}v_{p}a_{\pi^*_p}} = \mathbb{E}[D_{t}^{\sigma^*}]  \label{eq:optplaylist}
\end{equation}

\noindent
By (\ref{eq:optplaylist}), $\pi^{*}$ satisfies
\begin{equation}
\sum_{p}v_{p}a_{\pi_{p}^{*}}q_{\pi_{p}^{*}}-\mathbb{E}[D_{t}^{\sigma^*}] \ \sum_{p}v_{p}a_{\pi_{p}^{*}}=\sum_{p}v_{p}a_{\pi_{p}^{*}}\left(q_{\pi_{p}^{*}}-\mathbb{E}[D_{t}^{\sigma^*}]\right) = 0. \label{eq:opt}
\end{equation}

\noindent
Consider the function
\[
f(\lambda)=\max_{\pi}\sum_{p}v_{p}a_{\pi_{p}^{*}}\left(q_{\pi_{p}^{*}}-\lambda\right).
\]
Function $f$ is concave and strictly decreasing and hence its unique
zero is the optimal value $\mathbb{E}[D_{t}^{\sigma^*}]$.
Furthermore, the optimality condition for $\pi$ in function $f$ can be
characterized by a \textquotedbl{} rearrangement
inequality\textquotedbl{} which states that the maximum is reached if
and only if
\begin{equation}
a_{\pi_{1}^{*}}\left(q_{\pi_{1}^{*}}-\lambda\right)\geq\cdots\geq
a_{\pi_{n}^{*}}\left(q_{\pi_{n}^{*}}-\lambda\right)
\quad\forall\lambda\in\mathcal{R}. \label{eq:rearrangement}
\end{equation}

\noindent
We are now in position to state our key theoretical result.
\begin{thm}
\label{thm:increasing}
The expected downloads are nondecreasing over time when the performance ranking
is used to select the playlist.
\end{thm}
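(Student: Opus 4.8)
The plan is to prove the single-step inequality $\mathbb{E}[D_{t+1}^{\sigma^*,\sigma^*}]\geq\mathbb{E}[D_t^{\sigma^*}]$; together with the optimality inequality (\ref{eq:same}) and the tower property applied to the Markov state at time $t$, this yields that the expected downloads are nondecreasing. Throughout I write $\mu=\mathbb{E}[D_t^{\sigma^*}]$ and work in the playlist coordinates of (\ref{eq:optplaylist}), with $S=\sum_p v_p a_{\pi^*_p}$ and $T=\sum_p v_p a_{\pi^*_p}q_{\pi^*_p}$, so that $\mu=T/S$. The probability that the song in position $p$ is sampled and downloaded is $v_p a_{\pi^*_p}q_{\pi^*_p}/S$, and conditioned on that event the post-increment expected downloads under the same playlist is $\mu_p=(T+v_p q_{\pi^*_p})/(S+v_p)$.

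First I would record the identity $\mu_p-\mu=\frac{v_p(q_{\pi^*_p}-\mu)}{S+v_p}$, obtained by putting the two fractions over a common denominator. Substituting this into the Markov expansion of $\mathbb{E}[D_{t+1}^{\sigma^*,\sigma^*}]$ and using $\sum_p v_p a_{\pi^*_p}q_{\pi^*_p}/S=\mu$, the no-download contribution $(1-\mu)\mu$ recombines with the $\mu^2$ coming from the download branch, and the target inequality $\mathbb{E}[D_{t+1}^{\sigma^*,\sigma^*}]-\mu\geq 0$ reduces (after multiplying by $S>0$) to
\[
(\star)\qquad \sum_{p}\frac{v_p^2\,a_{\pi^*_p}\,q_{\pi^*_p}\,(q_{\pi^*_p}-\mu)}{S+v_p}\ \geq\ 0.
\]

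To prove $(\star)$ I would set $c_p=a_{\pi^*_p}(q_{\pi^*_p}-\mu)$, which is non-increasing in $p$ by the rearrangement characterization (\ref{eq:rearrangement}) taken at $\lambda=\mu$, and which satisfies $\sum_p v_p c_p=0$ by the optimality identity (\ref{eq:opt}). Splitting $q_{\pi^*_p}=(q_{\pi^*_p}-\mu)+\mu$ in the numerator of $(\star)$ decomposes it as
\[
\sum_p \frac{v_p^2\,a_{\pi^*_p}(q_{\pi^*_p}-\mu)^2}{S+v_p}\ +\ \mu\sum_p \frac{v_p^2 c_p}{S+v_p}.
\]
The first sum is nonnegative because $a_{\pi^*_p}>0$, and since $\mu\geq 0$ it remains only to show $\sum_p (v_p c_p)\,\phi_p\geq 0$, where $\phi_p=v_p/(S+v_p)$.

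This Chebyshev-type inequality is the step I expect to be the main obstacle. The weights $\phi_p$ are non-increasing because $v_1\geq\cdots\geq v_n>0$ and $x\mapsto x/(S+x)$ is increasing, while $c_p$ is non-increasing and $\sum_p v_p c_p=0$ forces it to pass from nonnegative to nonpositive at a single index $p_0$. The key device is that $\sum_p v_p c_p=0$ lets me subtract any constant, $\sum_p (v_p c_p)\phi_p=\sum_p v_p c_p(\phi_p-\phi^*)$; choosing $\phi^*$ between $\phi_{p_0+1}$ and $\phi_{p_0}$ makes $c_p$ and $\phi_p-\phi^*$ carry the same sign for every $p$, so each summand $v_p c_p(\phi_p-\phi^*)$ is nonnegative and the whole sum is nonnegative. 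This establishes $(\star)$ and hence the theorem.
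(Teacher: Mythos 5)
Your proposal is correct, and its first half coincides with the paper's own proof: the Markov expansion, the cancellation of the no-download branch against the $\mu^2$ term from the download branch, and the reduction to your inequality $(\star)$ reproduce exactly the paper's reduction to its inequality (\ref{eq:condition1}). The difference lies in how you finish. The paper partitions the songs into $P_1=\{j : q_j \geq \lambda^*\}$ and $P_2=\{j : q_j<\lambda^*\}$ and bounds the composite weight $v_jq_j/(\sum_i v_i a_i + v_j)$ from below on $P_1$ and from above on $P_2$ by the single constant $\underline{v}\,\underline{q}/(\sum_i v_i a_i + \underline{v})$ built from the minimal visibility and minimal quality in $P_1$, after which the optimality identity (\ref{eq:opt}) kills the resulting bound. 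You instead split $q_p=(q_p-\mu)+\mu$, discard a manifestly nonnegative sum of squares, and apply a pivot argument to the remaining sum $\sum_p v_p c_p \phi_p$ with $\phi_p=v_p/(S+v_p)$: since $\sum_p v_p c_p=0$ you may subtract any constant $\phi^*$, and choosing it between the $\phi$-values of the two sign classes makes every summand nonnegative. Both finishes rest on the same two pillars --- the rearrangement ordering (\ref{eq:rearrangement}) at $\lambda=\mu$ and the identity (\ref{eq:opt}) --- and both are ``intermediate constant'' arguments at heart, but yours buys two things: the weights $\phi_p$ no longer involve the qualities, so you never need the paper's comparison $\underline{q}\geq \max_{i \in P_2} q_i$, and the discarded square term gives the quantitative refinement $\mathbb{E}[D_{t+1}]-\mathbb{E}[D_{t}] \geq \frac{1}{S}\sum_p v_p^2 a_{\pi^*_p}(q_{\pi^*_p}-\mu)^2/(S+v_p)$, i.e., the increase is strict unless every song's quality equals $\lambda^*$. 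One small correction of wording: the single sign change of $c_p$ follows from its monotonicity, i.e., from (\ref{eq:rearrangement}), not from $\sum_p v_p c_p = 0$; the zero-sum identity merely rules out all terms having the same strict sign and licenses the subtraction of $\phi^*$.
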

\begin{proof}
By Equation \ref{eq:same}, it suffices to show that 
\[
\mathbb{E}[D_{t+1}^{\sigma^*,\sigma^*}] \geq \mathbb{E}[D_{t}^{\sigma^*}].
\]
Let $a=a_{1},\cdots,a_{n}$ represent the current state at time $t$.
For ease of notation, we rename the songs so that $\sigma^*_i = i$. We
also drop the ranking superscript $\sigma^*$ from the download
notation. Denote the optimal expected number of downloads at time $t$
as
\[
\mathbb{E}[D_{t}]=\frac{\sum_{i}v_{i}a_{i}q_{i}}{\sum_{i}v_{i}a_{i}}=\lambda^{*}.
\]
Then, the expected number of downloads in time $t+1$ conditional to time $t$
is
\[
\mathbb{E}[D_{t+1}]=\sum_{j}\left(\frac{v_{j}a_{j}q_{j}}{\sum v_{i}a_{i}}\cdot\frac{\sum_{i\not=j}v_{i}a_{i}q_{i}+v_{j}(a_{j}+1)q_{j}}{\sum_{i\not=j}v_{i}a_{i}+v_{j}(a_{j}+1)}\right)+\left(1-\frac{\sum_{i}v_{i}a_{i}q_{i}}{\sum_{i}v_{i}a_{i}}\right)\cdot\frac{\sum_{i}v_{i}a_{i}q_{i}}{\sum_{i}v_{i}a_{i}}
\]
\[
=\sum_{j}\left(\frac{v_{j}a_{j}q_{j}}{\sum v_{i}a_{i}}\cdot\frac{\sum_{i}v_{i}a_{i}q_{i}+v_{j}q_{j}}{\sum_{i}v_{i}a_{i}+v_{j}}\right)+\left(1-\frac{\sum_{j}v_{j}a_{j}q_{j}}{\sum_{i}v_{i}a_{i}}\right)\cdot\lambda^{*}.
\]
Proving
\begin{equation}
\mathbb{E}[D_{t+1}] \geq \mathbb{E}[D_{t}]\label{eq:1}
\end{equation}
amounts to showing that 
\[
\sum_{j}\left(\frac{v_{j}a_{j}q_{j}}{\sum v_{i}a_{i}}\cdot\frac{\sum_{i}v_{i}a_{i}q_{i}+v_{j}q_{j}}{\sum_{i}v_{i}a_{i}+v_{j}}\right)+\left(1-\frac{\sum_{j}v_{j}a_{j}q_{j}}{\sum_{i}v_{i}a_{i}}\right)\cdot\lambda^{*} \geq \lambda^{*}.
\]
which reduces to proving
\[
\frac{1}{\sum_{i}v_{i}a_{i}}\sum_{j}\left[\frac{v_{j}^2a_{j}q_{j}}{\sum_{i}v_{i}a_{i}+v_{j}}\left(q_{j}-\lambda^{*}\right)\right] \geq 0
\]
or, equivalently, 
\begin{equation}\label{eq:condition1}
\sum_{j}\left[\frac{v_{j}^2a_{j}q_{j}}{\sum_{i}v_{i}a_{i}+v_{j}}\left(q_{j}-\lambda^{*}\right)\right] \geq 0.
\end{equation}

\noindent
By the rearrangement inequality (\ref{eq:rearrangement}), the performance ranking at step $t$ produces a
playlist $\pi$ satisfying
\[
a_{\pi_{1}^{*}}\left(q_{\pi_{1}^{*}}-\lambda^{*}\right)\geq\cdots\geq a_{\pi_{n}^{*}}\left(q_{\pi_{n}^{*}}-\lambda^{*}\right).
\]
Therefore, if $a_{i}\left(q_{i}-\lambda^{*}\right)\geq
a_{j}\left(q_{j}-\lambda^{*}\right)$, then $v_{i}\geq v_{j}$ for any
$i,j\in N$. Hence, the performance rank allocates all songs with a
negative term after the songs with a positive term. Define
$P_{1}=\left\{ i\in N|\left(q_{i}-\lambda^{*}\right)\geq0\right\}$ and
$P_{2}=\left\{ i\in N|\left(q_{i}-\lambda^{*}\right)<0\right\}$. It
follows that $v_{i}\geq v_{j}$ for all $i\in P_{1}$, $j\in P_{2}$.
Inequality (\ref{eq:condition1}) can be rewritten as follows:
\[
\sum_{j\in P_{1}}\left[\frac{v_{j}a_{j}q_{j}v_{j}}{\sum_{i}v_{i}a_{i}+v_{j}}\left(q_{j}-\lambda^{*}\right)\right]+\sum_{j\in P_{2}}\left[\frac{v_{j}a_{j}q_{j}v_{j}}{\sum_{i}v_{i}a_{i}+v_{j}}\left(q_{j}-\lambda^{*}\right)\right] \geq 0.
\]
By definition of $P_{1}$ and $P_{2}$,  the terms in the left summation are positive and
the terms in the right summation are negative.  Additionally, for any $c>0$,
\[
\frac{v_{i}}{c+v_{i}}\geq\frac{v_{j}}{c+v_{j}}\Leftrightarrow(c+v_{j})v_{i}\geq(c+v_{i})v_{j}\Leftrightarrow cv_{i}\geq cv_{j}\Leftrightarrow v_{i}\geq v_{j}
\]
Hence, since $v_{1}\geq v_{2}\geq\ldots\geq v_{n}$, 
\[
\frac{v_{1}}{\sum_{i}v_{i}a_{i}+v_{1}}\geq\frac{v_{2}}{\sum_{i}v_{i}a_{i}+v_{2}}\geq\ldots\geq\frac{v_{n}}{\sum_{i}v_{i}a_{i}+v_{n}}.
\]
Let $\underline{v}=\min_{i\in P_{1}}v_{i}$ the smallest visibility
assigned to an item in $P_{1}$. By the rearrangement inequality
(\ref{eq:rearrangement}), $\underline{v}\geq\max_{i\in
  P_{2}}v_{i}$. Similarly, let $\underline{q}=\min_{i\in
  P_{1}}q_{i}$. By definiton of $P_{1}$ and $P_{2}$,
$\underline{q}\geq\max_{i\in P_{2}}q_{i}$.  These observations leads
to the following inequality:
\[
\sum_{j\in P_{1}}\left[\frac{v_{j}a_{j}q_{j}v_{j}}{\sum_{i}v_{i}a_{i}+v_{j}}\left(q_{j}-\lambda^{*}\right)\right]+\sum_{j\in P_{2}}\left[\frac{v_{j}a_{j}q_{j}v_{j}}{\sum_{i}v_{i}a_{i}+v_{j}}\left(q_{j}-\lambda^{*}\right)\right]
\]
\[
\geq\frac{\underline{v}\underline{q}}{\sum_{i}v_{i}a_{i}+\underline{v}}\sum_{j\in P_{1}}\left[a_{j}v_{j}\left(q_{j}-\lambda^{*}\right)\right]+\frac{\underline{v}\underline{q}}{\sum_{i}v_{i}a_{i}+\underline{v}}\sum_{j\in P_{2}}\left[a_{j}v_{j}\left(q_{j}-\lambda^{*}\right)\right]
\]
\[
=\frac{\underline{v}\underline{q}}{\sum_{i}v_{i}a_{i}+\underline{v}}\sum_{j=1}^n\left[a_{j}v_{j}\left(q_{j}-\lambda^{*}\right)\right].
\]
By Equation \ref{eq:opt},
\[
\sum_{j=1}^{n}\left[a_{j}v_{j}\left(q_{j}-\lambda^{*}\right)\right]=0
\]
and the result follows.
\end{proof}

\noindent
\begin{cor}
  In expectation, the P-ranking policy under social influence achieves
  more downloads than the any ranking policy under the independent
  condition.
\end{cor}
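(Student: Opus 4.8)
The plan is to derive the corollary from Theorem~\ref{thm:increasing} and the optimality proposition for the independent condition, bridged by the single observation that P-rank(SI) and the best independent policy coincide \emph{at the first step}. I would measure performance as the cumulative expected downloads $\sum_{t=0}^{N-1}\mathbb{E}[D_t]$ over the horizon $N$, the natural quantity since each iteration contributes at most one download. First I would pin down the benchmark. Under the independent condition the sampling probabilities $p_i(\sigma)=v_{\sigma_i}A_i/\sum_j v_{\sigma_j}A_j$ do not depend on the download counts and are therefore constant in time, so the expected downloads contributed by any fixed ranking $\sigma$ at any step equal the time-invariant rate $\sum_i v_{\sigma_i}A_i q_i/\sum_i v_{\sigma_i}A_i$. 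Because this rate is independent of the realized history, no adaptive ``measure and react'' rule can beat choosing, at every step, the ranking that maximizes it; this is precisely the content of the optimality proposition, which I would cite to conclude that the best policy under the independent condition is P-rank(IN), with per-step value $\lambda^{IN}:=\max_\sigma \sum_i v_{\sigma_i}A_i q_i/\sum_i v_{\sigma_i}A_i$ and cumulative value at most $N\lambda^{IN}$.

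The key bridging step is to evaluate P-rank(SI) at the initial step. At $t=0$ no downloads have occurred, so $a_i=\alpha A_i$ and the optimal social-influence rate is $\mathbb{E}[D_0^{\sigma^*}]=\max_\sigma \sum_i v_{\sigma_i}\alpha A_i q_i/\sum_i v_{\sigma_i}\alpha A_i$. The scaling factor $\alpha$ cancels between numerator and denominator, so this equals $\lambda^{IN}$ exactly: the two policies start from the same expected rate. I would then invoke Theorem~\ref{thm:increasing}, which is proved pointwise for an arbitrary state $a$ at time $t$ and gives $\mathbb{E}[D_{t+1}\mid a]\ge\mathbb{E}[D_t\mid a]$. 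Taking the expectation over the random state at time $t$ and applying the law of total expectation lifts this to the unconditional inequality $\mathbb{E}[D_{t+1}^{\sigma^*}]\ge\mathbb{E}[D_t^{\sigma^*}]$. Chaining these inequalities yields $\mathbb{E}[D_t^{\sigma^*}]\ge\mathbb{E}[D_0^{\sigma^*}]=\lambda^{IN}$ for every $t$, and summing over $t=0,\dots,N-1$ gives cumulative expected downloads at least $N\lambda^{IN}$, which dominates the best independent policy and hence, by the optimality proposition, every independent policy.

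The main obstacle I anticipate is the passage from the conditional, pointwise-in-state monotonicity supplied by Theorem~\ref{thm:increasing} to an unconditional statement about the stochastic evolution of P-rank(SI): one must verify that the theorem's inequality holds for \emph{every} reachable state, so that averaging over the distribution at time $t$ preserves it, and that the per-step optimality used at $t+1$ (Equation~\ref{eq:same}) remains valid state by state under this averaging. The remaining points are bookkeeping---confirming the cancellation of $\alpha$ at $t=0$ and that the cap $N\lambda^{IN}$ applies to an arbitrary independent policy rather than only to P-rank(IN). Finally, since the chained inequalities are weak, the dominance is in the $\ge$ sense; it becomes strict as soon as the monotonicity of Theorem~\ref{thm:increasing} is strict at some step, i.e. whenever inequality~(\ref{eq:condition1}) is strict, which holds in any non-degenerate market.
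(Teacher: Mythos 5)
Your proposal is correct and follows essentially the same route as the paper, whose one-sentence proof implicitly relies on exactly your three ingredients: the time-invariance of per-step expected downloads under the independent condition, the coincidence (via cancellation of $\alpha$) of {\sc P-rank(SI)}'s initial rate with $\lambda^{IN}$, and the monotonicity of Theorem~\ref{thm:increasing} chained over time. You simply make explicit the bookkeeping (law of total expectation over states, summation over the horizon) that the paper leaves to the reader.
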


\noindent
This corollary follows directly from the fact that the expected
downloads at each step do not change under the independent condition
regardless of the ranking policy in use.

\section{Discussion}
\label{section-conclusion}

This paper presented a ``measure and optimize'' algorithm for the
\musiclab{} which computes, at each refresh step, an optimal ranking
of the songs, given the appeal, estimated quality, current download
counts, and position visibility. This performance ranking, which
maximizes the expected number of downloads, can be computed in
strongly polynomial time, making the \mo{} algorithm highly
scalable. Our experimental results reveal two key findings:
\begin{enumerate}
\item The \mo{} algorithm leverages social influence to bring
  significant benefits in expected downloads.

\item Social influence and position bias both bring improvements in
  expected downloads and their effects are cumulative.
\end{enumerate}
Our theoretical results formally validate the first finding and prove
that the performance ranking under social influence always achieves
more expected downloads than any ranking policy under the independent 
condition. 

Our results shed an interesting and novel light about social influence
and the unpredictability it creates in cultural markets.  The
unpredictability coming from social influence is often presented as an
undesirable property of cultural markets. {\em However, as our
  experimental and theoretical results show, social influence, when
  used properly, is guaranteed to help in maximizing the efficiency of
  the market.} So, unless predictability is a goal per se, social
influence should be seen as an asset for optimizing cultural markets.

Our results also show that, in reasonable settings and with a
performance ranking, the unpredictability created by the use of social
influence is small. Moreover, the performance ranking identifies
``blockbusters'' and these ``blockbusters'' are the primary reason for
the increased efficiency of the market. There are conditions in the
\musiclab{} where the best song is not recovered and does not have the
most downloads. This happens when the best song has a ``bad'' appeal
and is dominated by another song. Assume, for instance, that we have
only two songs and $q_1 > q_2$. If $v_1 A_1 q_1 < v_2 A_2 q_2$ ($v_1 >
v_2$), the added visibility is not able to overcome the bad appeal:
Song 1, despite its better quality, will be dominated by song 2. Note
however that these situations can be predicted, since these dominance
relationships can be identified. Equally important, social influence
still helps in these settings compared to policies not using it.

The unpredictability when using the performance ranking comes from
songs that are indistinguishable, i.e., $ A_i q_i \approx A_j q_j$. In
this case, social influence may favor one of the songs depending on
how the process unfolds. It is interesting to point out once again
that these situations can be identified a priori. Together with the
dominance relationships, these equivalence relationships helps us
identify the sources of unpredictability, i.e., which songs may emerge
in the \musiclab{}. Moreover, it is conceivable that we can also
control this unpredictability to ensure fairness and reduce
inequalities. This is a topic we are currently investigating.

Our experiments have also shown that, contrary to the download
ranking, the performance ranking does not transform average songs into
``hits''. In reasonable settings, it always identifies the
``blockbusters'' while, in a worst-case scenario where song appeal is
negatively correlated with song quality, it still promotes
high-quality songs. There is a certain robustness in the performance
ranking which is not present in the download ranking.

A key insight from this research is the need to leverage social
influence carefully. The download ranking uses social influence twice:
Directly through the download counts and indirectly through the
ranking. The experimental results indicate that the resulting
algorithm puts too much emphasis on download counts, which are not
necessarily a good indication of quality due to the sampling
process. In contrast, the performance ranking balances appeal,
quality, and social influence. 

Observe that, in the generative model of the \musiclab{}, the
perceived quality of a song is not affected by social influence: Only
the probability of listening to the song is. Our results continue to
hold when the perceived quality of a song is affected by social
influence \cite{Lorenz2011,Farrell2011,Heiko2011}, e.g., when the
perceived quality of song $i$ is improved by $\epsilon_i \geq 0$, when
song $i$ is downloaded. Similarly, the results generalize to the case
where the sampling probability is given by
\[
p_{i,k}(\sigma) =  \frac{v_{\sigma_i}(\alpha A_i+f(D_{i,k}))}{\sum_{j=1}^n v_{\sigma_j}(\alpha A_j+f(D_{j,k}))},
\]
where function $f$ is positive nondecreasing. 

We hope that these results will revive the debate about the
consequences of social influence. Our results show that social
influence is beneficial for the efficiency of the market and that much
of its induced unpredictability can be controlled. It appears that a
key issue is to continue building our understanding of the nature of
social influence and to determine how best to use it and for which
purposes: Efficiency, predictability, fairness, ...

\section*{Acknowledgments}

We would like to thank Franco Berbeglia for some key contributions to
the proofs and Lorenzo Coviello for a very detailed reading of the
first draft of this paper. NICTA is funded by the Australian
Government as represented by the Department of Broadband,
Communications and the Digital Economy and the Australian Research
Council through the ICT Centre of Excellence program.

\bibliographystyle{plain} 
\bibliography{Music_Market_Manuel}

\end{document}